\documentclass{IEEEtran}

\usepackage{times}
\usepackage{amsmath}
\usepackage{graphicx}
\usepackage{amssymb}
\usepackage{subfigure}
\usepackage{amsthm}
\usepackage{algorithm}
\usepackage{algorithmic}
\usepackage{color}
\usepackage{subfig}
\usepackage{multirow}
\usepackage{url}
\usepackage{cite}
\usepackage[sort,square,numbers]{natbib}
\usepackage{stfloats}
\RequirePackage{CJKnumb}

\usepackage{microtype}

\usepackage{color}

\newtheorem{theorem}{Theorem}

\long\def\symbolfootnote[#1]#2{\begingroup
	\def\thefootnote{\fnsymbol{footnote}}\footnote[#1]{#2}\endgroup}

\pagestyle{empty}

\title{New 
EVENODD+ Codes with More Flexible Parameters and Lower Complexity}

\author{Panyu Zhu$^\S$, Jingjie Lv$^\dagger$, Yunghsiang S. Han$^\S$, Linqi Song$^{\star}$, Hanxu Hou$^\dagger$$^{\star}$\\
$^\S$ Shenzhen Institute for Advanced Study, University of Electronic Science and Technology of China\\
$^\dagger$ School of Electrical Engineering \& Intelligentization, Dongguan University of Technology\\
 $^{\star}$ Department of Computer Science, City University of Hong Kong 
}

\begin{document}
\begin{sloppypar}
\begin{CJK}{UTF8}{gbsn}	
\maketitle
\thispagestyle{empty}
\vspace{-0.5cm}
\begin{abstract}\symbolfootnote[0]{This work was partially supported by the National Key Research and Development Program of China under Grant 2022YFA1004902, Key-Area Research and Development Program of Guangdong Province 2020B0101110003, National Key R\&D Program of
China (No. 2020YFA0712300), the National Natural Science Foundation of China (No. 62071121, 62371411),
Basic Research Enhancement Program of China under Grant 2021-JCJQ-JJ-0483.\\}
EVENODD+ codes are binary maximum distance separable (MDS) array codes for correcting double disk failures in RAID-6 with asymptotically optimal encoding/decoding/update complexities. However, the number of bits stored in each disk of EVENODD+ codes should be an odd number minus one.
In this paper, we present a new construction of EVENODD+ codes that have more flexible parameters. The number of bits stored in each disk of our codes is an odd minus one times any positive integer. Moreover, our codes not only have asymptotically optimal encoding/decoding/update complexities but also have lower encoding/decoding/update complexities than the existing EVENODD+ codes.
\end{abstract}
	
	
	\IEEEpeerreviewmaketitle
	
	\section{Introduction}
	
	\IEEEPARstart{A}RRAY codes were initially used to correct track errors in magnetic tape storage. After the concept of redundant arrays of inexpensive disk (RAID) \cite{RAID89} was proposed, such systems were characterized by a large number of data disks and a small number of redundant disks, with a focus on computational efficiency. Therefore, array codes are suitable for these systems \cite{I/O}, \cite{storage}. 
	
In a storage system, the more redundancy added to the system to increase fault tolerance, the higher the additional cost of the system will be. 
As a type of erasure coding, binary array codes \cite{array} have the advantages of simple operations and easy implementation. Maximum distance separable (MDS) array codes have the additional advantage of high storage efficiency. Encoding/decoding complexity is the key metric of binary MDS array codes that is defined as the total number of XORs involved in the encoding/decoding process. Another important metric is the update complexity, which is defined as the average number of parity
bits affected by a change of a single information bit. 
	
There are many binary MDS array codes in the literature. EVENODD codes \cite{blaum1995evenodd} and RDP codes \cite{corbett2004row} are two important binary MDS array codes with asymptotically optimal encoding/decoding complexity but with sub-optimal update complexity. Recently, EVENODD+ codes have been proposed in \cite{hou_new_2018} to achieve asymptotically optimal update complexity. However, the number of bits stored in each column is restricted to be an odd number minus one. In this paper, we will present new constructions of EVENODD+ that can support many more parameters.
Some other binary MDS array codes can be found in \cite{xu1999x,Plank2008The,Wu2011H,Li2012C,Shen2014HV,EEO,hou2014,Fu2016Short,Hou2018form,Hou2017A,extend+,hou2021star}.
	
	We first provide an overview of EVENODD codes \cite{blaum1995evenodd} and EVENODD+ codes \cite{hou_new_2018}.  EVENODD codes are $(p-1)\times (k+2)$ array codes, where $p\ge k$ is prime. The first $k$ columns are information columns, and the last two columns are parity columns.  For $i=0,1,\ldots,p-2$, let $b_{i,j}$ be the bits stored in column $j$, where $j=0,1,\ldots,k+1$.  The parity bits $b_{i,k}$ in column $k$ are computed by
	\[
	b_{i,k}=\sum_{j=0}^{k-1}b_{i,j},
	\]
	and the parity bits $b_{i,k+1}$ in column $k+1$ are computed by
	\[
	b_{i,k+1}=b_{p-1,k+1}+\sum_{j=0}^{k-1}b_{i-j,j},
	\]
	where $b_{p-1,j}=0$ for $j=0,1,\ldots,k-1$, and
	$b_{p-1,k+1}=\sum_{j=1}^{k-1}b_{p-1-j,j}$ is called the \emph{common bit}.  Note that the subscripts above are
	taken modulo $p$. 
The minimum update complexity of systematic MDS $(p-1)\times (k+2)$ array codes is $2+\frac{1}{p}(1-\frac{1}{k})$ \cite{Blaum1999On}.
We can count that the update complexity of EVENODD codes is $3-\frac{p+k-2}{k(p-1)}$, which is sub-optimal. The main reason for the sub-optimal update complexity is that the common bit $b_{p-1,k+1}$ is added to all the parity bits in column $k+1$.
EVENODD+ codes \cite{hou_new_2018} can achieve the asymptotically optimal update complexity by only adding the common bit $b_{p-1,k+1}$ to a partial of the parity bits in column $k+1$. Specifically, EVENODD+ codes add the common bit to the first $2\lfloor \frac{k}{2} \rfloor$ parity bits in column $k+1$.
The update complexity of EVENODD+ codes is $2+(2\lfloor \frac{k}{2} \rfloor-1)\frac{k-1}{k(p-1)}$, which is asymptotically optimal for $p\gg k$.

In this paper, We present a new construction for EVENODD+ codes with more flexible parameters. Our new codes have two properties: (i) MDS property and (ii) lower encoding/decoding/update complexity than EVENODD+ codes. The new codes build on the observation that the number of bits stored in each disk of EVENODD+ codes is an odd number minus one. In contrast, the number of bits stored in each disk of our codes can be an odd number minus one times any positive integer so as to support more parameters.
	

	\section{New Construction of EVENODD+ Codes}
	\label{sec:array_code}
In this section, we present a new construction of EVENODD+ codes with an array size of $\tau(p-1)\times(k+2)$, where $p$ is an odd number, $\tau$ and $k$ are positive integers. Denote the bit in column 
$j$ and row $i$ as $b_{i,j}$, where $j=0,1,\ldots,k+1$ and $i=0,1,\ldots,\tau(p-1)-1$. Columns $j$ with $j=0, 1,\ldots, k-1$ are information columns that store the information bits, and columns $k$ and $k+1$ are parity columns that store the parity bits. In the rest of the paper, the subscripts are taken modulo $\tau p$ unless otherwise specified.
	
As with EVEONDD codes,	the bits in column $k$ are computed by the summation of the information bits in the same row, i.e.,
	\begin{equation}
		b_{i,k}=\sum_{j=0}^{k-1}b_{i,j} \quad 0\leq i \leq \tau(p-1)-1.
		\label{framework:row}
	\end{equation}
For easier presentation, let 
$b_{i,j}=0$ for $i=\tau(p-1), \tau(p-1)+1,\ldots, \tau p-1$ and $j=0, 1,\ldots, k-1$. Let 
\[
t=\min(k-1,\tau).
\]
The bits in column  $k+1$ are computed by,
\begin{equation}
b_{i,k+1}=\begin{cases}
S_{i\bmod t}+&\sum\limits_{j=0}^{k-1}b_{i-j,j}  \\ 
& 0\le i\le 2\lfloor \frac{k-1}{2} \rfloor t-1,\\
\sum\limits_{j=0}^{k-1}b_{i-j,j} \\  
& 2\lfloor \frac{k-1}{2} \rfloor t\le i\le \tau(p-1)-1, \label{eq:2}
	\end{cases}
    \end{equation}
where $S_{\mu}=\sum\limits_{j=1}^{k-1}b_{\tau(p-1)+\mu-j,j}$ are $t$ common bits for $\mu=0,1,\dots, t-1$.
Similar to the construction of EVENODD+ codes in \cite{hou_new_2018}, we also add the common bit to a partial of the parity bits in column $k+1$. This is the essential reason for our codes to achieve asymptotically optimal update complexity. Note that the codes in \cite{hou_new_2018} are a special class of our codes with $\tau=1$.
Note that our codes can support many more parameters. The number of bits stored in each column of our codes can be an odd number minus one times any positive integer, while the number of bits in each column should be an odd number minus one in~\cite{hou_new_2018}.  

Table~\ref{table:evenodd} illustrates an example of our codes with $(\tau,p,k)=(2, 5, 3)$. We have two common bits $S_{0}=b_{7,1}+b_{6,2}$ and $S_{1}=b_{7,2}$ that are added to the first four bits in column~4.

 	\begin{table}[!htbp]
		\caption{\centering{Example of $(\tau,p,k)=(2, 5, 3)$.}}
		\label{table:evenodd}
		\centering
		\begin{tabular}{|c|c|c|c|c|} \hline
			$b_{0,0}$ & $b_{0,1}$ &  $b_{0,2}$& $b_{0,0}+b_{0,1}+b_{0,2}$& $b_{0,0}+S_{0}$ \\ \hline
			$b_{1,0}$ & $b_{1,1}$ &  $b_{1,2}$& $b_{1,0}+b_{1,1}+b_{1,2}$& $b_{1,0}+b_{0,1}+S_{1}$ \\ \hline
			$b_{2,0}$ & $b_{2,1}$ &  $b_{2,2}$& $b_{2,0}+b_{2,1}+b_{2,2}$& $b_{2,0}+b_{1,1}+b_{0,2}+S_{0}$ \\ \hline
			$b_{3,0}$ & $b_{3,1}$ &  $b_{3,2}$& $b_{3,0}+b_{3,1}+b_{3,2}$& $b_{3,0}+b_{2,1}+b_{1,2}+S_{1}$ \\ \hline
			$b_{4,0}$ & $b_{4,1}$ &  $b_{4,2}$& $b_{4,0}+b_{4,1}+b_{4,2}$& $b_{4,0}+b_{3,1}+b_{2,2}$ \\ \hline
			$b_{5,0}$ & $b_{5,1}$ &  $b_{5,2}$& $b_{5,0}+b_{5,1}+b_{5,2}$& $b_{5,0}+b_{4,1}+b_{3,2}$ \\ \hline
			$b_{6,0}$ & $b_{6,1}$ &  $b_{6,2}$& $b_{6,0}+b_{6,1}+b_{6,2}$& $b_{6,0}+b_{5,1}+b_{4,2}$ \\ \hline
			$b_{7,0}$ & $b_{7,1}$ &  $b_{7,2}$& $b_{7,0}+b_{7,1}+b_{7,2}$& $b_{7,0}+b_{6,1}+b_{5,2}$ \\ \hline
		\end{tabular}
		\vspace{-5pt}
	\end{table}

	\section{The MDS Property}
	\label{sec:mds}
	
In this section, we show that the proposed codes are MDS codes by giving the decoding method for any two-column failures.
	
 \begin{theorem}
Our codes are MDS codes if $p$ is an odd integer such that all divisors of p except 1 are larger than $k-1$.
\label{thm:mds}
\end{theorem}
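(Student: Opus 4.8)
The plan is to verify the MDS property in its equivalent form: the whole array can be reconstructed from any $k$ of the $k+2$ columns, i.e.\ any two erased columns can be recovered. Split into three cases by how many erased columns are parity columns. If both parity columns are erased, re-encode them from the $k$ information columns via \eqref{framework:row} and \eqref{eq:2}. If an information column $a$ and parity column $k+1$ are erased, the row-parity column $k$ survives and immediately gives $b_{i,a}=b_{i,k}+\sum_{j\ne a,\,j<k}b_{i,j}$ for every row $i$. The two remaining cases both require inverting the diagonal parities \eqref{eq:2}: (I) an information column $a$ together with the row-parity column $k$, and (II) two information columns $a<b$. Throughout, note that the hypothesis on $p$ is equivalent to saying $\gcd(\delta,p)=1$ for every $1\le\delta\le k-1$: a value $\gcd(\delta,p)>1$ would be a divisor of $p$ not exceeding $\delta\le k-1$, contradicting the assumption.

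For case (II), I would use column $k$ to substitute $b_{i,b}=b_{i,a}+\rho_i$ ($\rho_i$ known) into \eqref{eq:2}, turning the $i$-th diagonal check into a relation $b_{i-a,a}+b_{i-b,a}=\gamma_i$, where $\gamma_i$ is known except for the common bit $S_{i\bmod t}$ it contains when $i<2\lfloor\frac{k-1}{2}\rfloor t$. Writing $\delta=b-a$ and re-indexing by $m=i-b$, this is a step-$\delta$ recursion $b_{m+\delta,a}=b_{m,a}+\gamma_{m+b}$ on the $\tau p$ residues in which exactly $\tau$ links are absent --- precisely the ones that would read off a padded zero row $b_{\tau(p-1)},\dots,b_{\tau p-1}$ of column $a$. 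So the $\tau(p-1)$ unknowns of column $a$ split into chains obtained by repeatedly adding $\delta$ modulo $\tau p$, seeded by the $\tau$ consecutive known padded zeros; these chains exhaust $\mathbb{Z}/\tau p\mathbb{Z}$ provided $\tau$ consecutive seeds meet every orbit of the $+\delta$ map, i.e.\ provided $\gcd(\delta,\tau p)\le\tau$. This is where the hypothesis enters: if $g=\gcd(\delta,\tau p)$ then $g\mid\delta$ forces $\gcd(g,p)\mid\gcd(\delta,p)=1$, hence $g\mid\tau$, so $\gcd(\delta,\tau p)=\gcd(\delta,\tau)\le\tau$. Once column $a$ is recovered, column $b$ follows from $b_{i,b}=b_{i,a}+\rho_i$. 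Case (I) is the same idea but needs no arithmetic condition: column $k+1$ survives, each entry pins down $b_{i-a,a}$ up to a known term and possibly a common bit, and since $i\mapsto i-a$ is a bijection modulo $\tau p$ the operator to invert is essentially a cyclic shift plus a low-rank common-bit correction; morally this matches the fact that, working in $\mathbb{F}_2[x]/\big((x^{\tau p}-1)/(x^{\tau}-1)\big)$ (dimension $\tau(p-1)$, the column length), the diagonal check after row-parity elimination becomes multiplication by $1+x^{\delta}$ in case (II) and by the unit $x^{a}$ in case (I), and $\gcd\big(1+x^{\delta},(x^{\tau p}-1)/(x^{\tau}-1)\big)=1$ is again exactly $\gcd(\delta,\tau p)=\gcd(\delta,\tau)$.

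The step I expect to require the most care is the bookkeeping around the common bits. Unlike EVENODD, the common bits $S_0,\dots,S_{t-1}$ are inserted only into the first $2\lfloor\frac{k-1}{2}\rfloor t$ rows of column $k+1$ (with period $t$), so the parity column is not simply a reduction modulo $(x^{\tau p}-1)/(x^\tau-1)$; moreover each $S_\mu$ itself depends on a few of the erased bits, so the recursions above are not purely feed-forward. I would resolve this by first recovering the bottom rows of the erased columns from the diagonal equations with index $i\ge 2\lfloor\frac{k-1}{2}\rfloor t$ (which carry no common bit), thereby determining $S_0,\dots,S_{t-1}$, and only then running the chains. Verifying that this ordering is always available, that every erased bit is indeed constrained, and that it behaves correctly in both regimes $\tau\ge k-1$ (no diagonal wrap-around past the padded rows) and $\tau<k-1$ (partial wrap-around, which feeds extra common-bit terms into the low rows) is the technical heart; everything else is routine linear algebra over $\mathbb{F}_2$.
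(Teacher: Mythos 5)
Your reduction of the two-information-column case to a step-$\delta$ recursion ($\delta=g-f$) seeded by the $\tau$ padded zero rows, and your translation of the hypothesis into $\gcd(\delta,p)=1$, hence $\gcd(\delta,\tau p)=\gcd(\delta,\tau)\le\tau$, is the right skeleton and is consistent with what the paper's chaining argument actually exploits. However, the step you yourself flag as ``the technical heart'' --- the common-bit bookkeeping --- is not just unverified, the ordering you propose for it does not exist. You suggest first recovering the bottom rows of the erased columns from the diagonal checks with $i\ge 2\lfloor\frac{k-1}{2}\rfloor t$ (those free of common bits) and thereby determining $S_0,\dots,S_{t-1}$ before running the chains. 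But a check with index $i$ touches a padded zero of column $f$ or $g$ only when $i\in\{f-\tau,\dots,f-1\}$ or $i\in\{g-\tau,\dots,g-1\}$ (mod $\tau p$, intersected with $[0,\tau(p-1)-1]$), i.e.\ only for $i\le k-2<2\lfloor\frac{k-1}{2}\rfloor t$ (apart from a degenerate corner); every common-bit-free check therefore relates two \emph{unknown} bits and yields no value by itself. The seeds are reachable only through the low-index checks, which all carry common bits, and those common bits in turn depend on erased bottom-row bits --- exactly the circularity you defer, and it is not broken by your two-phase ordering.

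The paper resolves this differently, and this is the part your proposal is missing: it first computes the \emph{sum} $\sum_{\mu}S'_{\mu}$ of all common bits by adding up every parity bit in columns $k$ and $k+1$ (a parity count of how many times each $S_\mu$ occurs, which is where the placement ``first $2\lfloor\frac{k-1}{2}\rfloor t$ rows'' matters), then runs the step-$(g-f)$ chains from the padded zeros, showing that whenever a chain re-enters the common-bit region it either wraps onto a padded zero row --- at which point an individual $S'_\mu$ can be isolated --- or can be advanced using the known sum; this is done with an explicit case analysis on whether $(g-f)\mid\tau(p-1)$, and separately for $\tau\ge k-1$ and $\tau<k-1$ (in the latter, the subcase $g-f=\tau$ is handled by viewing the array as $\tau$ interleaved EVENODD+ codewords). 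Your one-parity-column cases and the polynomial-ring heuristic are fine, but without a correct mechanism for extracting the individual common bits (and a verification that every orbit of the $+\delta$ map actually receives a usable seed when $\gcd(\delta,\tau)>1$), the proof of the hard case is incomplete.
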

\begin{proof}
Suppose that two columns are erased. We show that we can retrieve all the information bits if $p$ is an odd integer and all divisors of $p$ except 1 are larger than $k-1$. We divide the decoding method on two column erasures into two cases: $(i)$ one information and one parity column are erased; $(ii)$ two information columns are erased.

  
        
        
		
  
\textbf{Case $(i)$.} Suppose that columns~$f$ and $k+1$ are erased, where $0\leq f\leq k-1$.  We can recover the information bits in column $f$ by
		\[
		b_{i,k}+(b_{i,0}+b_{i,1}+\cdots+b_{i,f-1}+b_{i,f+1}+\cdots+b_{i,k-1})=b_{i,f},
		\]
according to Eq. \eqref{framework:row}, for $i=0,1,\ldots,\tau(p-1)-1$. Second, suppose that columns $f$ and $k$ are erased, where $0\leq f\leq k-1$. When $f=0$, we can compute the bit $b_{i,0}$ for $i=0,1,\ldots,\tau(p-1)-1$ by
     	\begin{align*}
     	&b_{i,k+1}+S_{i}+\sum_{j=1}^{k-1}b_{i-j,j}\\
    	={}&S_{i}+\sum\limits_{j=0}^{k-1}b_{i-j,j}+S_{i}+\sum_{j=1}^{k-1}b_{i-j,j}={}b_{i,0}.
        \end{align*}

When $1\leq f \leq k-1$, we have $0\le f-1 \le k-2 \le$2$\lfloor \frac{k}{2} \rfloor t-1$. According to Eq. \eqref{eq:2} with $i=f-1$, we can recover $b_{\tau(p-1)+(f-1)\bmod t-f,f}$ by
\begin{align*}
&b_{f-1,k+1}+\sum\limits_{j=1,j\neq f}^{k-1}b_{\tau(p-1)+(f-1)\bmod t-j,j}+\\
&\sum_{j=0,j\neq f}^{k-1}b_{f-1-j,j}\\
        ={}&\sum\limits_{j=0}^{k-1}b_{f-1-j,j}+\sum\limits_{j=1}^{k-1}b_{\tau(p-1)+(f-1)\bmod t-j,j}+\\
        &\sum\limits_{j=1,j\neq f}^{k-1}b_{\tau(p-1)+(f-1)\bmod t-j,j}+\sum_{j=0,j\neq f}^{k-1}b_{f-1-j,j}\\
        ={}&b_{\tau(p-1)+(f-1)\bmod t-f,f}+b_{p\tau-1,f}\\
        ={}&b_{\tau(p-1)+(f-1)\bmod t-f,f}.
        \end{align*}
Similarly, by Eq. \eqref{eq:2}, we can recover $b_{\tau(p-1)+i\bmod t-f,f}$ for $i=f-1,f-2,\ldots,\max(0,f-t)$, and then compute the $t$ common bits. For 
$$i\in \{0,1,\cdots,2\lfloor \frac{k-1}{2} \rfloor t-1\}\setminus \{f-1,f-2,\ldots,\max(0,f-t)\},$$ 
we can recover the bit $b_{i-f,f}$ by
		\begin{align*}
		&b_{i,k+1}+S_{i}+\sum_{j=0,j\neq f}^{k-1}b_{i-j,j}\\
		={}&S_{i}+\sum\limits_{j=0}^{k-1}b_{i-j,j}+S_{i}+\sum_{j=0,j\neq f}^{k-1}b_{i-j,j}={}b_{i-f,f}.
		\end{align*}
For $2\lfloor \frac{k-1}{2} \rfloor t\le i\le \tau(p-1)-1$, the common bit is not added to the bit $b_{i,k+1}$, we can recover the bit $b_{i-f,f}$ by
		\begin{align*}
		&b_{i,k+1}+\sum_{j=0,j\neq f}^{k-1}b_{i-j,j}\\
		={}&\sum\limits_{j=0}^{k-1}b_{i-j,j}+\sum_{j=0,j\neq f}^{k-1}b_{i-j,j}={}b_{i-f,f}.
		\end{align*}
We have recovered all the bits in column $f$.
		
\textbf{Case $(ii)$.} Suppose that information columns $f$ and $g$ are erased, where $0\leq f<g \leq k-1$.  We can first compute $\sum\limits_{\mu=0}^{t-1}S_{\mu}$ by summing all the parity bits in columns~$k$ and $k+1$. In the following, we present the decoding method by considering two sub-cases: $(ii.1)$ $\tau\geq k-1$ and $(ii.2)$ $\tau< k-1$. 
		
		
\textbf{Sub-case $(ii.1)$ $\tau\geq k-1$.} We have 
$t=k-1$ common bits.
		\label{>,g-f=a,ke}
We first consider the special case of $f=0$ and $g=k-1$. By subtracting the $\tau(p-1)(k-2)$ information bits in $k-2$
surviving information columns from the parity bits $b_{i,k}$ and $b_{i,k+1}$, we can obtain the following bits
\begin{equation}
b'_{i,k}=b_{i,0}+ b_{i,k-1},
\label{eq:3}
\end{equation}
where $i=0,1,\ldots,(p-1)\tau-1$,
\begin{equation}
b'_{i,k+1}=\begin{cases}
		S'_{i\bmod(k-1)}&+b_{i,0}+b_{i-(k-1),k-1},\\ 
& 0\le i\le 2\lfloor \frac{k-1}{2} \rfloor t-1,\\
		b_{i,0}&+b_{i-(k-1),k-1}, \\& 2\lfloor \frac{k-1}{2} \rfloor t\le i\le \tau(p-1)-1,
	    \end{cases}
\label{eq:4}
       \end{equation}
where $S'_{\mu}=b_{\tau(p-1)+\mu-(k-1),k-1}$ for $\mu=0,1,\ldots,k-2$.
If $\tau(p-1)$ is divisible by $k-1$, by summing the bit in Eq. \eqref{eq:3} with $i=0$ and the two bits in Eq. \eqref{eq:4} with $i=0$ and $i=k-1$, we have
\begin{align*}
&(S'_{0}+b_{0,0}+b_{1-k,k-1})+(S'_{0}+b_{k-1,0}+b_{0,k-1})+\\
&(b_{0,0}+b_{0,k-1})\\
={}&b_{k-1,0},
\end{align*}
where the above equality comes from $b_{1-k,k-1}=0$.
Then we can compute $b_{k-1,k-1}$ by Eq. \eqref{eq:3} with $i=k-1$. Let $i=\ell (k-1)$ with integer $\ell\geq 1$, the bit in Eq. \eqref{eq:4} does not contain the common bit if $\ell(k-1) \bmod \tau(p-1) \geq 2\lfloor \frac{k-1}{2} \rfloor(k-1)$, and we can compute $b_{\ell(k-1),0}$ and $b_{\ell(k-1),k-1}$ recursively by Eq. \eqref{eq:3} and Eq. \eqref{eq:4}, where $1\leq \ell \leq \frac{\tau(p-1)}{k-1}$. 

By summing the bit in  Eq. \eqref{eq:3} with $i=0+m$ and the two bits in Eq. \eqref{eq:4} with $i=0+m$ and $i=k-1+m$, we have
\begin{align*}
&(S'_{m}+b_{m,0}+b_{m+1-k,k-1})+(S'_{m}+b_{k-1+m,0}+b_{m,k-1})\\
&+(b_{m,0}+b_{m,k-1})\\
&={}b_{k-1+m,0},
\end{align*}
where $0 \le m \le k-2$, the above equality comes from $b_{m+1-k,k-1}=0$. Then we can compute $b_{k-1+m,k-1}$ by Eq. \eqref{eq:3} with $i=k-1+m$. By summing $b_{k-1+m,k-1}$ and the bit in Eq. \eqref{eq:4} with $i=2(k-1)+m$, we have
\begin{align*}
&b_{k-1+m,k-1}+(b_{2(k-1)+m,0}+b_{(k-1)+m,k-1})\\
={}&b_{2(k-1)+m,0}.
\end{align*}
Let $i=\ell (k-1)+m$ with integer $\ell\geq 1$. The bit in Eq. \eqref{eq:4} does not contain the common bit if $\ell(k-1)+m \bmod \tau(p-1) \geq 2\lfloor \frac{k-1}{2} \rfloor(k-1)$, and we can compute $b_{\ell(k-1)+m,0}$ and $b_{\ell(k-1)+m,k-1}$ recursively by Eq. \eqref{eq:3} and Eq. \eqref{eq:4}, where $1\leq \ell \leq \frac{\tau(p-1)}{k-1}$, $0 \le m \le k-2$. 

As the bit $b_{k-1+m,k-1}$ is known, if the bit in Eq. \eqref{eq:4} contains the common bit, we  use  Eq. \eqref{eq:4} with $i=2(k-1)+m$ to compute
\begin{align*}
&b_{k-1+m,k-1}+(S'_{m}+b_{2(k-1)+m,0}+b_{(k-1)+m,k-1})\\
={}&b_{2(k-1)+m,0}+S'_{m}.
\end{align*}
By Eq. \eqref{eq:3} with $i=2(k-1)+m$, we can compute $b_{2(k-1)+m,k-1}+S'_{m}$. By summing the bit in Eq. \eqref{eq:4} with $i=3(k-1)+m$, we have
\begin{align*}
&b_{2(k-1)+m,k-1}+S'_{m}+(S'_{m}+b_{3(k-1)+m,0}+b_{2(k-1)+m,k-1})\\
={}&b_{3(k-1)+m,0}.
\end{align*}
By repeating the above process for $\ell=2,3,\ldots,\frac{\tau(p-1)}{k-1}-1$, we can compute
\begin{align*}
&b_{\ell(k-1)+m,k-1}+S'_{m}\\
={}&b_{\tau(p-1)-(k-1)+m,k-1}+b_{\tau(p-1)+m-(k-1),k-1}\\
={}&0.
\end{align*}
Therefore, we can have $S'_{m}$, and further obtain $b_{\ell(k-1)+m,0}$ and $b_{\ell(k-1)+m,k-1}$ by Eq. \eqref{eq:3} and Eq. \eqref{eq:4}, where $1\leq \ell \leq \frac{\tau(p-1)}{k-1}$, $0 \le m \le k-2$. 
 
Otherwise, if $\ell \geq \frac{\tau(p-1)}{k-1}$, the bit in Eq. \eqref{eq:4} contains the common bit, we can compute one erased bit by summing two bits in Eq. \eqref{eq:4}. For example, when $\ell=\frac{\tau(p-1)}{k-1}$, by summing the known bit $b_{(\ell-1)(k-1),k-1}$ and the bit in Eq. \eqref{eq:4} with $i=\ell (k-1)$, we have 
\begin{align*}
&b_{(\ell-1)(k-1),k-1}+(b_{\ell(k-1),0}+b_{(\ell-1)(k-1),k-1}+S'_{0})\\
={}&b_{\ell(k-1),0}+b_{\tau(p-1)-(k-1),k-1}\\
	   	={}&b_{\ell(k-1),0}.
	   \end{align*}

When $i=\frac{\tau(p-1)}{k-1}$, we have $i(k-1)\bmod \tau(p-1)=0$, we can get 
	   \begin{align*}
	   	&b_{(i-1)(k-1),k-1}+[b_{i(k-1)\bmod\tau(p-1),0}+\\
            &b_{i(k-1)\bmod\tau(p-1)-(k-1),k-1}+S'_{i(k-1)\bmod\tau(p-1)\bmod k-1}]\\
	   	={}&b_{(i-1)(k-1),k-1}+[b_{i(k-1),0}+b_{p\tau-(k-1),k-1}+S'_{0}]\\
	   	={}&b_{i(k-1)-(k-1),k-1}+[b_{i(k-1),0}+0+b_{\tau(p-1)-(k-1),k-1}]\\
	   	={}&b_{i(k-1),0}.
	   \end{align*}
      
Next, we show that we have recovered all the erased bits, i.e., $\ell(k-1)+m=\{0,1,\ldots,\tau(p-1)-1\}$ with $\ell=1,2,\ldots,\frac{\tau(p-1)}{k-1}$ and $m=0,1,\ldots,k-2$. We need to show that $\ell_1(k-1)+m_1\neq \ell_2(k-1)+m_2$, for $1\leq \ell_1 < \ell_2 \leq \frac{\tau(p-1)}{k-1}$ and $0\leq m_1<m_2 \leq k-2$. Suppose that $\ell_1(k-1)+m_1=\ell_2(k-1)+m_2 \bmod \tau (p-1)$. We have
\begin{align*}
&\ell_1(k-1)+m_1+t\tau(p-1)=\ell_2(k-1)+m_2, \text{ and obtain,} \\  &	\frac{(\ell_2-\ell_1)(k-1)+(m_2-m_1)}{\tau(p-1)}=t,
      \end{align*}
where $t$ is a positive integer.
We can further obtain that
\begin{align*}
      	1\leq(\ell_2-\ell_1)\leq&\frac{\tau(p-1)}{k-1}-1,\\
      	k\leq(\ell_2-\ell_1)(k-1)+(m_2-m_1)\leq&\tau(p-1)-1,\\
      	t=\frac{(\ell_2-\ell_1)(k-1)+(m_2-m_1)}{\tau(p-1)}<&1,
      \end{align*}
which contradicts with $t\geq 1$.      Therefore, we have recovered all the erased bits of $(g,f)=(k-1,0)$ and $\tau(p-1)$ is divisible by $k-1$.
      
      \label{>,g-f=a,buke}

Next, we present the decoding method when $\tau(p-1)$ is not divisible by $k-1$.
By summing the bit in  Eq. \eqref{eq:3} with $i=0$ and the two bits in Eq. \eqref{eq:4} with $i=0$ and $i=k-1$, we have $b_{k-1,0}$. Then we can compute $b_{k-1,k-1}$ by Eq. \eqref{eq:3} with $i=k-1$. Let $i=\ell (k-1)$ with integer $\ell\geq 1$, the bit in Eq. \eqref{eq:4} does not contain the common bit if $\ell(k-1) \bmod \tau(p-1) \geq 2\lfloor \frac{k-1}{2} \rfloor(k-1)$, and we can compute $b_{\ell(k-1),0}$ and $b_{\ell(k-1),k-1}$ recursively by Eq. \eqref{eq:3} and Eq. \eqref{eq:4}, where $1\leq \ell \leq \tau(p-1)$. Otherwise, if the bit in Eq. \eqref{eq:4} contains the common bit, we can compute one erased bit and the common bit by summing two bits in Eq. \eqref{eq:4}. For example, when $\ell$ satisfies 0$\leq(\ell+1)(k-1)\bmod \tau(p-1)\leq k-2$, by summing the known bit $b_{\ell(k-1),k-1}$ and the bit in Eq. \eqref{eq:4} with $i=(\ell+1) (k-1)$, we have
      \begin{align*}
      	&b_{\ell(k-1),k-1}+(S'_{[(\ell+1)(k-1)\bmod\tau(p-1)]\bmod k-1}+\\
       &b_{(\ell+1)(k-1),0}+b_{[(\ell+1)(k-1)\bmod\tau(p-1)]-(k-1),k-1})\\
      	={}&b_{\ell(k-1),k-1}+b_{\ell(k-1),k-1}+\\
            & b_{(\ell+1)(k-1),0}+b_{p\tau-(k-1)+[(\ell+1)(k-1)\bmod\tau(p-1)],k-1}\\
      	={}&b_{\ell(k-1),k-1}+b_{\ell(k-1),k-1}+b_{(\ell+1)(k-1),0}+0\\
      	={}&b_{(\ell+1)(k-1),0}.
      \end{align*}
where the first equation comes from that $S'_{[(\ell+1)(k-1)\bmod\tau(p-1)]\bmod k-1}=b_{\ell(k-1),k-1}$. Therefore, we can have $S'_{[(\ell+1)(k-1)\bmod\tau(p-1)]\bmod k-1}$, and further obtain $b_{(\ell+1)(k-1),0}$ and $b_{(\ell+1)(k-1),k-1}$ by Eq. \eqref{eq:3} and Eq. \eqref{eq:4}, where $1\leq \ell \leq \tau(p-1)$. We have recovered all the erased information bits. 
       
       \label{>,g-f<a,ke and buke}
We have presented the decoding method for the special case of $f=0$ and $g=k-1$ in the above. In the following, we present the decoding method when $g-f < k-1$. 
First, we can compute the summation of the $t$ common bits by summing all the parity bits in columns~$k$
and $k+1$, i.e.,
\begin{eqnarray*}
&&\sum_{i=0}^{\tau(p-1)-1}b_{i,k}+\sum_{i=0}^{\tau(p-1)-1}b_{i,k+1} \nonumber \\
&=&\sum_{i=0}^{\tau(p-1)-1}\sum_{j=0}^{k-1}b_{i,j}+\sum_{i=0}^{\tau(p-1)-1}\sum_{j=0}^{k-1}b_{i-j,j}+\nonumber \\
&&\underbrace{S_{0}+\cdots+S_{t-1}+S_{0}+\cdots+S_{t-1}}_{(k-1)t \text{ terms if } k \text{ is odd}, kt \text{ terms if } k \text{ is even}} \\
&=&\sum_{j=0}^{k-1}\sum_{i=0}^{p\tau-1}b_{i,j}+\sum_{j=0}^{k-1}\sum_{i=0}^{p\tau-1}b_{i-j,j}+\sum_{j=0}^{k-1}\sum_{\mu=0}^{t-1}b_{\tau(p-1)+\mu-j,j}  \\
&=&\sum_{\mu=0}^{t-1}S_{\mu}. 
\end{eqnarray*}
By subtracting the $\tau(p-1)(k-2)$ information bits in $k-2$ surviving information columns from the parity bits $b_{i,k}$ and $b_{i,k+1}$, we can obtain the following bits
       \begin{equation}
       b'_{i,k}=b_{i,f}+ b_{i,g},
      \label{eq:5}
      \end{equation}
      where $i=0,1,\ldots,(p-1)\tau-1$,
      \begin{equation}
      b'_{i,k+1}=\begin{cases}
		S'_{i\bmod(k-1)}&+b_{i-f,f}+b_{i-g,g},\\ 
         & 0\le i\le 2\lfloor \frac{k-1}{2} \rfloor t-1,\\
		b_{i-f,f}&+b_{i-g,g}, \\& 2\lfloor \frac{k-1}{2} \rfloor t\le i\le \tau(p-1)-1,
	    \end{cases}
        \label{eq:6}
       \end{equation}
        where $i=0,1,\ldots,(p-1)\tau-1$ and $S'_{\mu}=b_{\tau(p-1)+\mu-f,f}+b_{\tau(p-1)+\mu-g,g}$ for $\mu=0,1,\ldots,k-2$. Similarly, we need to consider two cases: whether $\tau(p-1)$ is divisible by $(g-f)$. 
        If $\tau(p-1)$ is divisible by $(g-f)$, by summing the bit in  Eq. \eqref{eq:5} with $i=0-f, (g-f)-f,\dots, (k-3)(g-f)-f$, the bits in Eq. \eqref{eq:6} with $i=0, (g-f) ,\dots, (k-2)(g-f)$ and the sum of the common bits, we have
            \begin{align*}
    	&[S'_{0}+b_{p\tau-f,f}+b_{p\tau-g,g}+S'_{g-f}+b_{(g-f)-f,f}+b_{p\tau-f,g}+\\
            &\cdots+S'_{(k-2)(g-f)}+b_{(k-2)(g-f)-f,f}+b_{(k-3)(g-f)-f,g}]+\\
    	&[b_{p\tau-f,f}+b_{p\tau-g,g}+\cdots+b_{(k-3)(g-f)-f,f}+\\
     &b_{(k-3)(g-f)-f,g}]+\sum_{\mu=0}^{k-2}S'_{\mu}\\
    	={}&b_{(k-2)(g-f)-f,f},
    \end{align*}

By summing the bit in  Eq. \eqref{eq:5} with $i=0-f+m, (g-f)-f+m,\dots, (k-3)(g-f)-f+m$, the bits in Eq. \eqref{eq:6} with $i=0+m, (g-f)+m ,\dots, (k-2)(g-f)+m$ and the sum of the common bits, we have $b_{(k-2)(g-f)-f+m,f}$, where $0 \leq m \leq g-f-1$. Let $i=\ell (g-f)+m$ with integer $\ell\geq 1$. The bit in Eq. \eqref{eq:6} does not contain the common bit if $\ell(g-f)+m \bmod \tau(p-1) \geq 2\lfloor \frac{k-1}{2} \rfloor(k-1)$, and we can compute $b_{\ell(g-f)-f+m,f}$ and $b_{\ell(g-f)-f+m,g}$ recursively by Eq. \eqref{eq:5} and Eq. \eqref{eq:6}, where $k-2\leq \ell \leq \frac{\tau(p-1)}{g-f}+k-3$, $0 \le m \le g-f-1$. 
       
Otherwise, if the bit in Eq. \eqref{eq:6} contains the common bit, we can compute the common bit by the bit in Eq. \eqref{eq:6}. For example, when $\ell = \frac{\tau(p-1)}{g-f}$, the bit in Eq. \eqref{eq:6} with $i=\ell (g-f)+m$ is
    \begin{align*}
    	&S'_{\ell(g-f)+m}+b_{\ell(g-f)-f+m,f}+b_{\ell(g-f)-g+m,g}\\
            ={}&S'_{m}+b_{\tau p-f+m,f}+b_{\tau p-g+m,g}\\
    	={}&S'_{m},
    \end{align*}
where $0 \le m \le g-f-1$. Therefore, we can get $S'_{m}$, and further obtain $b_{\ell(g-f)-f+m,f}$ and $b_{\ell(g-f)-f+m,g}$ by Eq. \eqref{eq:5} and Eq. \eqref{eq:6}, where $k-2\leq \ell \leq \frac{\tau(p-1)}{g-f}+k-3$, $0 \le m \le g-f-1$.
  
When $\tau(p-1)$ is not divisible by $(g-f)$, by summing the bit in  Eq. \eqref{eq:5} with $i=0-f, (g-f)-f,\dots, (k-3)(g-f)-f$, the bits in Eq. \eqref{eq:6} with $i=0, (g-f) ,\dots, (k-2)(g-f)$ and the sum of the common bits, we have $b_{(k-2)(g-f)-f,f}$. Then let $i=\ell (g-f)$ with integer $\ell\geq 1$, we can similarly compute $b_{\ell(g-f)-f,f}$ and $b_{\ell(g-f)-f,g}$ if the bit in Eq. \eqref{eq:6} does not contain the common bit, where $k-2\leq \ell \leq$$\tau(p-1)+k-3$. If the bit in Eq. \eqref{eq:6} contains the common bit, we let $i=\ell (g-f)$ after getting $b_{(k-2)(g-f)-f,f}$. We can compute one erased bit and the common bit by summing two bits in Eq. \eqref{eq:6}. For example, when $\ell=\tau(p-1)-1$, by summing the known bit $b_{\ell(g-f),g}$ and the bit in Eq. \eqref{eq:6} with $i=(\ell+1) (g-f)$, we have
      \begin{align*}
      	&b_{\ell(g-f)-f,g}+(S'_{[(\ell+1)(g-f)\bmod\tau(p-1)]\bmod k-1}+\\
       &b_{(\ell+1)(g-f)-f,f}+b_{[(\ell+1)(g-f)\bmod\tau(p-1)]-g,g})\\
      	={}&b_{\ell(g-f)-f,g}+S'_{0}+b_{p\tau-f,f}+b_{p\tau-g,g}\\
      	={}&b_{p\tau-g,g}+S'_{0}+0+0\\
      	={}&S'_{0}.
      \end{align*}
Therefore, we can have $S'_{0}$, and further obtain $b_{\ell(g-f)-f,f}$ and $b_{\ell(g-f)-f,g}$ by Eq. \eqref{eq:5} and Eq. \eqref{eq:6}, where $k-2\leq \ell \leq$$\tau(p-1)+k-3$.

\textbf{Sub-case $(ii.2)$ $\tau< k-1$.} The decoding method of sub-case $(ii.2)$ $\tau< k-1$ is similar to that of sub-case $(ii.1)$ $\tau\geq k-1$. Please see the detailed proof in Appendix \ref{Appendix:Theorem2}.

\end{proof}
 
\vspace{-0.1cm}
\section{Complexity Analysis}
\vspace{-0.1cm}

    \begin{table*}[ht]
		\caption{The encoding/decoding/update complexity of our codes and EVENODD+ codes.}
		\label{table:complexity}
		\centering
        \renewcommand{\arraystretch}{2}
        \setlength{\tabcolsep}{1.5mm}{
		\begin{tabular}{|c|c|c|c|c|c|c|c|c|} \hline
			  \multicolumn{3}{|c|}{}  & \multicolumn{2}{|c|}{Encoding} & \multicolumn{2}{|c|}{Decoding} & \multicolumn{2}{|c|}{Update} \\ \hline
     
			\multicolumn{3}{|c|}{EVENODD+$(p,k)$} & \multicolumn{2}{|c|}{2$-\frac{2p-k}{k(p-1)}$} & \multicolumn{2}{|c|}{$2+\frac{2\lfloor\frac{k}{2}\rfloor-1}{k(p-1)}$} & \multicolumn{2}{|c|}{2$+(2\lfloor\frac{k}{2}\rfloor-1)\frac{k-1}{k(p-1)}$}\\ \hline
   
			\multicolumn{3}{|c|}{} &\multicolumn{2}{|c|}{~}&$(g-f)\mid \tau(p-1)$&$(g-f)\nmid\tau(p-1)$&\multicolumn{2}{|c|}{~}\\ \hline
   
			  \multirow{5}*{Our codes}&\multirow{2}*{$\tau \geq k-1$ }& $g-f=k-1$& \multicolumn{2}{|c|}{ \multirow{2}*{2$-\frac{k}{2}+\frac{2[\lfloor\frac{k-1}{2}\rfloor-1](k-1)}{k(p-1)\tau}$}}&$\frac{4}{k}+\frac{k-2}{k\tau(p-1)}$&$2+\frac{k-2}{k\tau(p-1)}$&\multicolumn{2}{|c|}{ \multirow{2}*{$2+(2\lfloor \frac{k-1}{2}\rfloor-1)\frac{k-1}{k\tau(p-1)}$}} \\ \cline{3-3} \cline{6-7}
     
                                      &&$g-f\textless k-1$&\multicolumn{2}{|c|}{~}&$\frac{4}{k}+\frac{(g-f)(k-2))}{k\tau(p-1)}$&$\frac{2(g-f+1)}{k}+\frac{(g-f)(k-2)}{k\tau(p-1)}$&\multicolumn{2}{|c|}{~}\\  \cline{2-9}
                                      
                                      &\multirow{3}*{$\tau \textless k-1$}&$g-f=\tau$&\multicolumn{2}{|c|}{ \multirow{3}*{2$-\frac{2p-2\lfloor\frac{k}{2}\rfloor-1}{k(p-1)}$}}&\multicolumn{2}{|c|}{$\frac{2}{k}+\frac{2p\tau-\tau-1}{k\tau(p-1)}$}&\multicolumn{2}{|c|}{ \multirow{3}*{$2+(2\lfloor \frac{k}{2}\rfloor-1)\frac{k-1}{k\tau(p-1)}$}}\\   \cline{3-3} \cline{6-7}
                                      
                                      &&$g-f\textless \tau$&\multicolumn{2}{|c|}{~}&\multirow{2}*{$\frac{4}{k}+\frac{(g-f)(\tau-1)-1}{k\tau(p-1)}$}&\multirow{2}*{$\frac{2(g-f+1)}{k}+\frac{(g-f)(\tau-1)}{k\tau(p-1)}$}&\multicolumn{2}{|c|}{~}\\   \cline{3-3} 
                                      
                                      &&$g-f\textgreater \tau$&\multicolumn{2}{|c|}{~}&&&\multicolumn{2}{|c|}{~}\\  \hline

		\end{tabular}}
		\vspace{-5pt}
     \end{table*}

In this section, we evaluate encoding/decoding/update complexities for our codes. Table \ref{table:complexity} summarizes the results of our codes and EVENODD+ codes \cite{hou_new_2018}. We focus on the decoding complexity for decoding two information erasures.

We first consider the encoding complexity. Computing the bits in column~$k$ takes $(k-1)\tau(p-1)$ XORs. Computing the bits in column~$k+1$ takes $(k-1)\tau(p-1)-(k-1)+2\lfloor \frac{k-1}{2} \rfloor(k-1)$ XORs when $\tau\geq k-1$ and $(k-1)\tau(p-1)-\tau+2\lfloor \frac{k}{2} \rfloor\tau$ XORs when $\tau < k-1$.  Thus, the encoding complexity is $2[(k-1)\tau(p-1)]-(k-1)+2\lfloor \frac{k-1}{2} \rfloor(k-1)$ when $\tau\geq k-1$ and $2[(k-1)\tau(p-1)]-\tau+2\lfloor \frac{k}{2} \rfloor\tau$ when $\tau < k-1$.
	
Next, we consider the decoding complexity of two information erasures. In our decoding method, we first compute the summation of the $t$ common bits that takes $2\tau(p-1)-1$ XORs. 
Then, we divide the decoding method into two cases: $\tau\geq k-1$ and $\tau < k-1$. When $\tau\geq k-1$ and $g-f=k-1$, if $\tau(p-1)$ is divisible by $g-f$, we can compute the erased bits with $(k-1)[1+\frac{2\tau(p-1)}{k-1}]$ XORs; otherwise, if $\tau(p-1)$ is not divisible by $g-f$, then it requires $(k-1)[1+2\tau(p-1)]$ XORs. When $\tau\geq k-1$ and  $g-f<k-1$, we require $(g-f)[k-1+\frac{2\tau(p-1)}{g-f}]$ if $\tau(p-1)$ is divisible by $g-f$ and $(g-f)[k-2+2\tau(p-1)]$ XORs if $\tau(p-1)$ is not divisible by $g-f$. 
	
When $\tau < k-1$ and $g-f=\tau$, we can obtain the bits with $\tau[1+2(p-1)]$ XORs. When $\tau < k-1$ and $g-f<\tau$, we require $(g-f)[\tau-1+\frac{2\tau(p-1)}{g-f}]$ XORs if $\tau(p-1)$ is divisible by $g-f$ and $(g-f)[\tau-1+2\tau(p-1)]$ XORs if $\tau(p-1)$ is not divisible by $g-f$. When $\tau < k-1$ and  $g-f > \tau$, we require $(g-f)[\tau-1+\frac{2\tau(p-1)}{g-f}]$ XORs if $\tau(p-1)$ is divisible by $g-f$ and $(g-f)[\tau-1+2\tau(p-1)]$ XORs if $\tau(p-1)$ is not divisible by $g-f$. 
	
We define the \emph{normalized encoding complexity} as the ratio of encoding complexity to the number of information bits and \emph{normalized decoding complexity} as the ratio of decoding complexity to the number of information bits. The normalized encoding complexity of our codes is $2-\frac{2}{k}$ when $\tau\geq k-1$ and $2-\frac{2p-2\lfloor \frac{k}{2} \rfloor-1}{k(p-1)}$ when $\tau \leq k$. Recall that the normalized encoding complexity of EVENODD+($p, k$) \cite{hou_new_2018} is $2-\frac{2p-k}{k(p-1)}$. We can see that when $\tau\geq k-1$, the normalized encoding complexity of our codes is smaller than that of EVENODD+ codes. The normalized decoding complexity of EVENODD+($p, k$) \cite{hou_new_2018} is $2+\frac{2\lfloor \frac{k}{2} \rfloor-1}{k(p-1)}$. From the results in Table \ref{table:complexity}, we see that our codes have lower decoding complexity than EVENODD+ codes when $\tau\geq k-1$.

Finally, we consider the update complexity. 
If an information bit is changed, we need to update one parity bit in column $k$ and $1+(2\lfloor \frac{k-1}{2}\rfloor-1)\frac{k-1}{k\tau(p-1)}$ parity bits in column $k+1$ on average when $\tau \geq(k-1)$ and update $1+(2\lfloor \frac{k}{2}\rfloor-1)\frac{k-1}{k\tau(p-1)}$ parity bits when $\tau <(k-1)$. Thus, the update complexity of our codes is $2+(2\lfloor \frac{k-1}{2}\rfloor-1)\frac{k-1}{k\tau(p-1)}$ when $\tau \geq(k-1)$ and $2+(2\lfloor \frac{k}{2}\rfloor-1)\frac{k-1}{k\tau(p-1)}$ when $\tau <(k-1)$. Recall that the update complexity of EVENODD+ codes is $2+(2\lfloor \frac{k}{2} \rfloor-1)\frac{k-1}{k(p-1)}$. Therefore, our codes have lower update complexity than EVENODD+ codes.

\section{Conclusion}
\label{sec:discussions}
In this paper, we present a new construction for EVENODD+ codes with two parity columns such that the number of bits stored in each column is prime minus one times any positive integer. We show that our codes are MDS codes. Moreover, our codes have lower encoding/decoding/update complexity than the existing EVENODD+ codes. 
	
	   \bibliographystyle{IEEEtran}
	   \bibliography{document.bib}
\end{CJK}
\clearpage

\appendices

\section{Proof of \textbf{Sub-case $(ii.2)$ $\tau< k-1$} in Theorem \ref{thm:mds}}\label{Appendix:Theorem2}

\begin{IEEEproof}
When $\tau < k-1$, the number of common bits is $ \tau$, we can divide the condition $0\leq f<g\leq k-1$ into two sub-conditions of $0\leq f<g\leq \tau$ and $\tau<g\leq k-1$, where the decoding method of the first sub-condition is the same as that of $\tau< k-1$. We only need to consider the condition of $\tau < g \leq k-1$.

        \label{<,g-f=a}
As with the sub-case $(ii.1)$ $\tau\geq k-1$, we still first consider the special sub-case of $g-f=\tau$. Suppose that $\tau(p-1)$ is divisible by $\tau$, and we can obtain the following bits by subtracting the $\tau(p-1)(k-2)$ information bits in $k-2$ surviving information columns from the parity bits $b_{i,k}$ and $b_{i,k+1}$,
        \begin{equation}
        	b'_{i,k}=b_{i,f}+ b_{i,g},
         \label{eq:7}
        \end{equation}
        where $i=0,1,\ldots,(p-1)\tau-1$.
        \begin{equation}
        	b'_{i,k+1}=\begin{cases}
        		S'_{i(\bmod\tau)}+&b_{i-f,f}+b_{i-g,g}, \\
          &0\le i\le 2\lfloor \frac{k}{2} \rfloor\tau-1,\\
        		b_{i-f,f}+&b_{i-g,g},\\
          &2\lfloor \frac{k}{2} \rfloor\tau\le i\le \tau(p-1)-1,
        	\end{cases}
         \label{eq:8}
        \end{equation}
where $i=0,1,\ldots,(p-1)\tau-1$ and $S'_{\mu}=b_{\tau(p-1)+\mu-f,f}+b_{\tau(p-1)+\mu-g,g}$ for $\mu=0,1,\ldots,\tau-1$. 
When $g-f=\tau$, we can view the $\tau(p-1)\times (k+2)$ array of our codes as the $\tau$ codewords of EVENODD+ codes. Therefore, we can recover all the erased bits in columns $g$ and $f$ by the decoding method of EVENODD+ codes in \cite{hou_new_2018}, under the condition that $p$ is an odd number such that all divisors of $p$ except 1 are larger than $k-1$.

    \label{<,g-f<a,ke}
In the following, we consider the case of $g-f<\tau$, we can get that $2\leq f<g$ and $\tau < g \leq k-1$. In this situation, we still need to discuss whether $\tau(p-1)$ is divisible by $g-f$. When $\tau(p-1)$ is divisible by $g-f$, we sum the bit in  Eq. \eqref{eq:7} with $i=0, (g-f) ,\dots, (\tau-2)(g-f)$, the bits in Eq. \eqref{eq:8} with $f, g, 2g-f ,\dots, (\tau-1)g-(\tau-2)f$ and the summation of the $\tau$ common bits to obtain
    \begin{align*}
    	&[S'_{f}+b_{0,f}+b_{f-g,g}+S'_{g}+b_{g-f,f}+b_{0,g}+\cdots+\\
            &S'_{(\tau-1)g-(\tau-2)f}+b_{(\tau-1)g-(\tau-1)f,f}+b_{(\tau-2)g-(\tau-2)f,g}]+\\
    	&[b_{0,f}+b_{0,g}+\cdots+b_{(\tau-2)(g-f),f}+b_{(\tau-2)(g-f),g}]+\sum_{\mu=0}^{\tau-1}S'_{\mu}\\
    	={}&b_{(\tau-1)(g-f),f},
    \end{align*}
    where $b_{p\tau+f-g}=0$.
    Let $i=(\ell+1)g-\ell f+m$ with integer $\ell\geq 1$. When the bit in Eq. \eqref{eq:8} does not contain the common bit,  similarly, we can calculate $b_{\ell(g-f)+m,f}$ and $b_{\ell(g-f)+m,g}$ recursively by Eq. \eqref{eq:7} and Eq. \eqref{eq:8}, where $(\tau-1)\leq \ell \leq$$\frac{\tau(p-1)}{g-f}+(\tau-2)$, 0$\leq m \leq (g-f-1)$.
     If the bit in Eq. \eqref{eq:8} contain the common bit, when $\ell=\frac{\tau(p-1)}{g-f}-1$, we can find the bits in Eq. \eqref{eq:8} with $i=(\ell+1)g-\ell f$ have
     $$S'_{(\ell+1)g-\ell f}+b_{[(\ell+1)g-\ell f]-f,f}+b_{[(\ell+1)g-\ell f]-g,g}.$$
     Since  $[(\ell+1)g-\ell f]\bmod \tau(p-1)=f$, the above bit is the common bit $S'_{f}$, so we can compute $b_{\ell(g-f)+m,f}$ and $b_{\ell(g-f)+m,g}$, for $(\tau-1)\leq \ell \leq$$\frac{\tau(p-1)}{g-f}+(\tau-2)$, 0$\leq m \leq (g-f-1)$.
     
    \label{<,g-f<a,buke}
    When $\tau(p-1)$ is not divisible by $g-f$, by summing the bit in  Eq. \eqref{eq:7} with $i=0, (g-f) ,\dots, (\tau-2)(g-f)$, the bits in Eq. \eqref{eq:8} with $f, g, 2g-f ,\dots, (\tau-1)g-(\tau-2)f$ and the sum of the common bits, we have $b_{(\tau-1)(g-f),f}$. Let $i=(\ell+1)g-\ell f$ with integer $\ell\geq 1$. When the bit in Eq. \eqref{eq:8} does not contain the common bit, we can calculate $b_{\ell(g-f),f}$ and $b_{\ell(g-f),g}$ recursively by Eq. \eqref{eq:7} and Eq. \eqref{eq:8}, where $(\tau-1)\leq \ell \leq$$\tau(p-1)+(\tau-2)$. Otherwise, if the bit in Eq. \eqref{eq:8} contains the common bit, let $\ell=\tau(p-1)-1$, the bit with $i=(\ell+1)g-\ell f$ in Eq. \eqref{eq:8} have
    $$S'_{(i+1)g-if}+b_{[(i+1)g-if]mod\tau(p-1)-f,f}+b_{[(i+1)g-if]mod\tau(p-1)-g,g},$$
    we can obtain the common bit and further compute the information bits $b_{\ell(g-f),f}$ and $b_{\ell(g-f),g}$, where $\tau-1 \leq \ell \leq$$\tau(p-1)+\tau-2$.
    
    \label{<,g-f>a,ke,buke}
Consider that $g-f > \tau$. if $\tau(p-1)$ is divisible by $g-f$, we sum the bit in  Eq. \eqref{eq:7} with $i=0, (g-f) ,\dots, (\tau-2)(g-f)$, the bits in Eq. \eqref{eq:8} with $i=f, g, 2g-f ,\dots, (\tau-1)g-(\tau-2)f$ and the summation of common bits to obtain
    \begin{align*}
    	&[S'_{f}+b_{0,f}+b_{f-g,g}+S'_{g}+b_{g-f,f}+b_{0,g}+\cdots+\\
    	&S'_{(\tau-1)g-(\tau-2)f}+b_{(\tau-1)g-(\tau-2)f-f,f}+b_{(\tau-1)g-(\tau-2)f-g,g}+\\
    	&[b_{0,f}+b_{0,g}+\cdots+b_{(\tau-2)(g-f),f}+b_{(\tau-2)(g-f),g}]+\sum_{\mu=0}^{\tau-1}S'_{\mu}\\
    	={}&b_{p\tau-(g-f),g}+b_{(\tau-1)(g-f),f}.
    \end{align*}
     Let $\ell= \frac{\tau(p-1)}{g-f}-1$, by summing the bit in Eq. \eqref{eq:8} with $i=(\ell+1)g-\ell f$ and $b_{\ell(g-f),f}+b_{p\tau-(g-f),g}$, we have
     \begin{align*}
     	&b_{\ell(g-f),f}+b_{p\tau-(g-f),g}+S'_{(\ell+1)g-\ell f}+\\
     	&b_{[(\ell+1)g-\ell f]\bmod\tau(p-1)-f,f}+b_{[(\ell+1)g- \ell f]\bmod\tau(p-1)-g,g}\\
     	={}&b_{\tau(p-1)-(g-f),f}+b_{p\tau-(g-f),g}+b_{\tau(p-1)+[(\ell+1)g-\ell f]-f,f}+\\
     	&b_{\tau(p-1)+[(\ell+1)g-\ell f]-g,g}+b_{[(\ell+1)g-\ell f]\bmod\tau(p-1)-f,f}+\\
            &b_{p\tau-(g-f),g}\\
     	={}&b_{[(\ell+1)g-\ell f]\bmod\tau(p-1)-f,f}={}b_{0,f}.
     \end{align*}
Therefore, we get the bit $b_{p\tau-(g-f),g}$ and the common bit $S'_{(\ell+1)g-\ell f}$. Let $i=(\ell+1)g-\ell f+m$ with integer $\ell\geq 1$, we can obtain $b_{\ell(g-f)+m,f}$ and $b_{\ell(g-f)+m,g}$ recursively by Eq. \eqref{eq:7} and Eq. \eqref{eq:8}, where $(\tau-1)\leq \ell \leq$$\frac{\tau(p-1)}{g-f}+(\tau-2)$, 0$\leq m \leq (g-f-1)$.
     
If $\tau(p-1)$ is not divisible by $g-f$, we can get $b_{p\tau-(g-f),g}+b_{(\tau-1)(g-f),f}$ similarly. When the bit in Eq. \eqref{eq:8} contains the common bit, let $i=(\ell+1)g-\ell f$ with integer $\ell\geq 1$. When $\ell=\tau(p-1)-1$, the bit $b_{\ell(g-f),g}+S'_{(\ell+1)g-\ell f}+b_{p\tau-(g-f),g}$ we can have
    \begin{align*}
    	&b_{\ell(g-f),g}+S'_{(\ell+1)g-\ell f}+b_{p\tau-(g-f),g}\\
    	={}&b_{[\tau(p-1)(g-f)-(g-f)]\bmod\tau(p-1),g}+S'_{(\ell+1)g-\ell f}+b_{p\tau-(g-f),g}\\
    	={}&b_{p\tau-(g-f),g}+S'_{(\ell+1)g-\ell f}+b_{p\tau-(g-f),g}={}S'_{(\ell+1)g-\ell f}.
    \end{align*}
     We can get $S'_{(\ell+1)g-\ell f}$. After obtaining the common bit, we can compute $b_{\ell(g-f)},f$ and $b_{\ell(g-f)},f$ by Eq. \eqref{eq:7} and Eq. \eqref{eq:8}, where $(\tau-1)\leq \ell \leq$$\tau(p-1)+(\tau-2)$.
     
We have shown that we can always compute all the erased bits for any two erased columns, and therefore, our codes are MDS codes.

\end{IEEEproof}
\end{sloppypar}
 \end{document}